\documentclass[twocolumn,10pt]{IEEEtran}

\usepackage[acronym]{glossaries}
\usepackage{amsmath,amssymb,amsthm}
\usepackage{graphicx}
\usepackage{comment}
\usepackage{xcolor}
\usepackage{booktabs}
\usepackage{subfigure}

\newacronym{awgn}{AWGN}{additive white Gaussian noise}
\newacronym{miso}{MISO}{multiple-input single-output}
\newacronym{rf}{RF}{radio frequency}
\newacronym{sinr}{SINR}{signal-to-interference-plus-noise ratio}
\newacronym{pcb}{PCB}{printed circuit board}
\newacronym{iid}{i.i.d.}{independent and identically distributed}

\newtheorem{definition}{Definition}

\newtheorem{proposition}{Proposition}

\begin{document}
\bstctlcite{BSTcontrol}

\title{\huge Low-Complexity Planar Beyond-Diagonal RIS\\Architecture Design Using Graph Theory}

\author{Matteo~Nerini,~\IEEEmembership{Member,~IEEE},
        Zheyu~Wu,\\
        Shanpu~Shen,~\IEEEmembership{Senior Member,~IEEE},
        Bruno~Clerckx,~\IEEEmembership{Fellow,~IEEE}
        
\thanks{M. Nerini, Z. Wu, and B. Clerckx are with the Department of Electrical and Electronic Engineering, Imperial College London, London SW7 2AZ, U.K. (e-mail: \{m.nerini20, zheyu.wu, b.clerckx\}@imperial.ac.uk).}
\thanks{S. Shen is with the State Key Laboratory of Internet of Things for Smart City and Department of Electrical and Computer Engineering, University of Macau, Macau, China. (e-mail: shanpushen@um.edu.mo)}}
\maketitle

\begin{abstract}
Reconfigurable intelligent surfaces (RISs) enable programmable control of the wireless propagation environment and are key enablers for future networks.
Beyond-diagonal RIS (BD-RIS) architectures enhance conventional RIS by interconnecting elements through tunable impedance components, offering greater flexibility with higher circuit complexity.
However, excessive interconnections between BD-RIS elements require multi-layer \gls{pcb} designs, increasing fabrication difficulty.
In this letter, we use graph theory to characterize the BD-RIS architectures that can be realized on double-layer \glspl{pcb}, denoted as planar-connected RISs.
Among the possible planar-connected RISs, we identify the ones with the most degrees of freedom, expected to achieve the best performance under practical constraints.
\end{abstract}

\glsresetall

\begin{IEEEkeywords}
Beyond-diagonal reconfigurable intelligent surface (BD-RIS), graph theory, planar.
\end{IEEEkeywords}

\section{Introduction}

Reconfigurable intelligent surface (RIS) is a technology designed to control wireless communication environments \cite{wu19,dir20,wu21}.
An RIS is a surface consisting of multiple elements with reconfigurable reflecting properties that can be deployed in the propagation environment and used to shape signal propagation.
By steering the incident signal towards the intended direction and suppressing interference, RIS offers a promising green solution for next-generation communication networks such as 6G.
An RIS has been commonly implemented by connecting each RIS element to a tunable impedance, and individually controlling the reflection coefficient of each element.

The conventional RIS architecture, referred to as single-connected RIS \cite{she22}, has been generalized by interconnecting the RIS elements to each other through tunable impedance components in \cite{she22}, leading to the general family of beyond-diagonal RIS (BD-RIS).
When all RIS elements are interconnected to each other, the resulting architecture is denoted as fully-connected RIS \cite{she22}.
To trade flexibility and circuit complexity, the group-connected RIS was also proposed, where the RIS elements are divided into groups of equal size and are interconnected with each other if and only if within the same group \cite{she22}.
The presence of interconnections enables BD-RIS to have additional capabilities than conventional RIS, such as the possibility to achieve full-space coverage \cite{li23,li25}.

A fundamental question in BD-RIS is whether the additional circuit complexity justifies the performance improvement.
Therefore, it is crucial to design BD-RIS architectures that have low circuit complexity and achieve high performance gains at the same time.
This can be done by modeling BD-RIS architectures as graphs, and using graph theory to gain insights into their circuit complexity and performance \cite{ner24}.
Following this graph theoretical modeling, two BD-RIS architectures have been proposed in \cite{ner24}, i.e., the forest- and tree-connected RISs, that are proven to be the least complex architectures achieving the same performance of group- and fully-connected RISs in single-user systems.
With a focus on single-user systems, the fundamental limits of the trade-off between circuit complexity and performance achievable with BD-RIS have been derived in closed-form in \cite{ner23}, and also extended to dual-polarized systems in \cite{ner25}.
Further literature has also extended the results of \cite{ner24} to multi-user systems, proposing stem-connected RISs \cite{zho25a,zho25b} and band-connected RISs \cite{wu25} as BD-RISs achieving the same performance as fully-connected RISs with much reduced circuit complexity.

Previous research on the design of BD-RIS architectures aimed at minimizing the circuit complexity of the BD-RIS measured by the number of tunable impedance components \cite{ner24}-\cite{wu25}.
However, when the number of interconnections is too large, they inevitably cross each other, requiring a multi-layer \gls{pcb} design with vias to realize those crossings.
This increases the implementation cost, time, and probability of fabrication failure, posing a practical challenge.
A significantly more convenient approach is to use a double-layer \gls{pcb}, where one layer has the BD-RIS interconnections and the other serves as the ground plane.
In this work, we identify the requirements that a BD-RIS architecture must satisfy to be realizable on a double-layer \gls{pcb}, in which one layer is entirely reserved for the ground plane.
Furthermore, we characterize the BD-RIS architectures that include the largest number of tunable components while remaining compatible with double-layer \gls{pcb} implementation.

Specifically, the contributions of this letter are as follows.
\textit{First}, we use graph theory to characterize the BD-RIS architectures that can be implemented in a double-layer \gls{pcb}, which we refer to as planar-connected RISs.
\textit{Second}, we examine existing BD-RIS architectures to determine whether they are planar-connected.
\textit{Third}, we characterize the planar-connected RISs with the most degrees of freedom, denoted as maximal-planar-connected RISs, which are expected to achieve the best performance with the constraint of double-layer \gls{pcb}.

\section{BD-RIS Graph Theoretical Model}

Consider an $N$-element BD-RIS, where the RIS elements are connected to an $N$-port reconfigurable microwave network, commonly represented through its scattering matrix $\boldsymbol{\Theta}\in\mathbb{C}^{N\times N}$ \cite{she22}.
According to multiport network theory \cite[Chapter~4]{poz11}, $\boldsymbol{\Theta}$ can be expressed as a function of the admittance matrix of the BD-RIS $\mathbf{Y}\in\mathbb{C}^{N\times N}$ as
\begin{equation}
\boldsymbol{\Theta}=\left(\mathbf{I}+Z_{0}\mathbf{Y}\right)^{-1}\left(\mathbf{I}-Z_{0}\mathbf{Y}\right),\label{eq:T(Y)}
\end{equation}
where $Z_0$ is the reference impedance, set to $Z_0=50\;\Omega$.
In this work, we consider the admittance matrix representation since it can be directly linked to the tunable impedance (or admittance) components implementing the BD-RIS, assumed to be reciprocal.
Specifically, denoting as $Y_{n}$ the tunable admittance connecting the $n$th RIS element to ground and as $Y_{n,m}=Y_{m,n}$ the tunable admittance interconnecting the $n$th and the $m$th RIS elements, we have
\begin{equation}
\left[\mathbf{Y}\right]_{n,m}=\begin{cases}
-Y_{n,m} & n\neq m\\
Y_{n}+\sum_{k\neq n}Y_{n,k} & n=m
\end{cases},\label{eq:Yij}
\end{equation}
for $n,m=1,\ldots,N$.
From \eqref{eq:Yij}, we observe that if the $n$th and $m$th RIS elements are not interconnected, i.e., $Y_{n,m}=0$, the $(n,m)$th entry of the admittance matrix $\mathbf{Y}$ is zero, i.e., $[\mathbf{Y}]_{n,m}=0$.

To model the general circuit topology of a BD-RIS, we resort to graph theory by briefly recalling the model developed in \cite{ner24}.
Following this model, the circuit topology of any BD-RIS is represented through a graph $\mathcal{G}=(\mathcal{V},\mathcal{E})$, where $\mathcal{V}$ and $\mathcal{E}$ are the vertex set and the edge set of $\mathcal{G}$, respectively.
The vertex set is given by the indexes of the RIS elements, i.e. $\mathcal{V}=\{v_1,v_2,\ldots,v_N\}$, and the edge set is given by 
\begin{equation}
\mathcal{E}=\left\{ \left(v_n,v_m\right)|\:v_n,v_m\in\mathcal{V},\:Y_{n,m}\neq0,\:n\neq m\right\},
\end{equation}
which means that vertices $v_n$ and $v_m$ are connected by an edge if and only if there is a tunable admittance interconnecting the $n$th and $m$th RIS elements.

\section{Planar-Connected RIS}

In this section, we exploit the graph theoretical model of BD-RIS to identify the BD-RIS architectures that are implementable in a double-layer \gls{pcb}, where one layer is occupied by the interconnections and the other is the ground.
Such architectures are practically useful since having a \gls{pcb} with more layers increases the implementation complexity.
We begin by observing that a BD-RIS can be implemented into a double-layer \gls{pcb} when its interconnections can be routed on a single layer without crossings.
This means that its associated graph is a \textit{planar graph}, i.e., it can be drawn on the plane such that no edges cross each other.
To better clarify the graph theoretical definition of planar graph \cite[Chapter~9]{bon76}, we report two examples of non-planar and planar graphs in Fig.~\ref{fig:graphs}.
Following this observation, we refer to BD-RIS architectures that are implementable in a double-layer \gls{pcb} as planar-connected RISs, formally defined as follows.
\begin{definition}
(Planar-connected RIS)
A BD-RIS architecture with associated graph $\mathcal{G}$ is denoted as planar-connected when $\mathcal{G}$ is a planar graph for any number of RIS elements $N$.
\end{definition}
Note that every RIS element also needs to be connected to ground through an admittance component, which is not captured in $\mathcal{G}$.
Nevertheless, the ground layer can be reached from every point of the board through vias, always without crossing any interconnection.

\begin{figure}[t]
\centering
\includegraphics[width=0.28\textwidth]{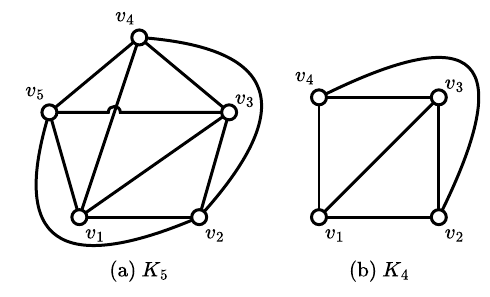}
\caption{(a) The complete graph on five vertices $K_5$ (non-planar), and (b) the complete graph on four vertices $K_4$ (planar).}
\label{fig:graphs}
\end{figure}

After determining the requirements on the graph $\mathcal{G}$ for a BD-RIS to be planar-connected, we study in the following four propositions when existing BD-RIS architectures are planar, i.e., whether they are planar-connected or not, and under what conditions.

\begin{proposition}
Any forest-connected RIS (\cite{ner24}), including the single-connected RIS and the tree-connected RIS as special cases, is planar-connected.
\label{pro:forest}
\end{proposition}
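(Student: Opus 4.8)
The plan is to reduce the claim to the classical graph-theoretic fact that every forest is a planar graph. First I would recall from \cite{ner24} that the graph $\mathcal{G}$ associated with a forest-connected RIS is, by construction, a forest, i.e., an acyclic graph, for every number of RIS elements $N$; the single-connected RIS corresponds to the edgeless graph and the tree-connected RIS to a connected tree, both of which are special cases of forests. Hence it suffices to show that an arbitrary forest admits a drawing in the plane with no crossing edges.

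Next I would establish this directly, by exhibiting a crossing-free drawing, rather than by invoking Kuratowski's theorem. For a single tree I would argue by induction on the number of vertices: pick a root, draw the subtrees hanging from the root's children inside pairwise-disjoint open disks, then place the root outside all those disks and join it to each child by a straight segment, arranging the disks (for instance along a line, or in disjoint angular sectors as seen from the root) so that the new segments are mutually non-crossing and none of them crosses an edge already drawn inside a disk. Since a forest is a disjoint union of trees, I would then draw each component tree in its own disjoint region of the plane, obtaining a planar drawing of the whole forest. Alternatively, and more briefly, I could note that any subdivision of $K_5$ or $K_{3,3}$ contains a cycle whereas a forest is acyclic, so a forest contains no such subdivision and is therefore planar by Kuratowski's theorem \cite[Chapter~9]{bon76}.

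Either way, since the graph associated with every forest-connected RIS is planar for all $N$, Definition~1 immediately yields that the forest-connected RIS is planar-connected, which completes the argument.

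I do not expect a substantive obstacle here: the statement is essentially a restatement of the well-known planarity of forests. The only point requiring a little care is verifying that the inductive tree drawing can be carried out with all the root-to-subtree segments mutually non-crossing, which is handled by placing the children's subtrees in disjoint convex regions arranged so that the root ``sees'' each of them along a separate direction.
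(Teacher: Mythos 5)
Your proposal is correct and follows essentially the same route as the paper: reduce the claim to the fact that the graph of a forest-connected RIS is acyclic and hence planar, which is exactly the paper's argument. The only difference is that the paper simply asserts that an acyclic graph can always be drawn without crossings, whereas you supply the (standard) justification via an explicit inductive drawing or via Kuratowski's theorem; this extra detail is fine but not a different approach.
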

\begin{proof}
To prove that any forest-connected RIS is planar, we recall that the graph $\mathcal{G}$ of a forest-connected RIS is \textit{acyclic}, i.e., it does not contain any cycle (a finite sequence of distinct edges joining a sequence of vertices where only the first and last vertices are equal) \cite{ner24}.
Since the graph of a forest-connected RIS does not have any cycle, it can always be drawn on the plane with no crossings, and therefore is planar.
\end{proof}

\begin{proposition}
A group-connected RIS (\cite{she22}) is planar-connected if and only if the group size is $N_G\leq4$, and the fully-connected RIS (\cite{she22}) is not planar-connected.
\label{pro:group}
\end{proposition}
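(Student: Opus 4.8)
The plan is to reduce the statement to the planarity of a single complete graph plus two elementary closure properties. The graph $\mathcal{G}$ associated with a group-connected RIS of group size $N_G$ and $N$ elements is, by construction, the disjoint union of $N/N_G$ copies of the complete graph $K_{N_G}$: within each group every pair of RIS elements is interconnected, while no tunable admittance joins elements of different groups. Hence deciding whether a group-connected RIS is planar-connected amounts to deciding planarity of $K_{N_G}$, using that (i) any subgraph of a planar graph is planar and (ii) the disjoint union of planar graphs is planar (draw the components in disjoint regions of the plane), both standard facts from \cite[Chapter~9]{bon76}.

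For the ``if'' direction I would take $N_G\le 4$: each group induces a copy of $K_{N_G}$, which is a subgraph of $K_4$; since $K_4$ is planar (Fig.~\ref{fig:graphs}(b)), property (i) makes each component planar and property (ii) makes $\mathcal{G}$ planar for every admissible $N$, so the architecture is planar-connected. For the ``only if'' direction I would take $N_G\ge 5$; here it suffices to exhibit a single $N$ for which $\mathcal{G}$ is non-planar, because being planar-connected requires planarity of $\mathcal{G}$ for \emph{every} $N$. Choosing $N=N_G$ gives $\mathcal{G}=K_{N_G}$, which contains $K_5$ as a subgraph (restrict to any five vertices); since $K_5$ is non-planar (Fig.~\ref{fig:graphs}(a)), the contrapositive of (i) shows $K_{N_G}$ is non-planar, so the group-connected RIS is not planar-connected. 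The fully-connected RIS is handled by the same argument: its graph is $K_N$, and for any $N\ge 5$ this contains $K_5$ and is therefore non-planar, so it fails the ``for any $N$'' requirement.

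I do not anticipate a genuine obstacle here, since the argument rests only on the two examples already displayed in Fig.~\ref{fig:graphs} together with textbook closure properties; no appeal to Kuratowski-type characterizations is needed because the forbidden $K_5$ appears as an honest subgraph, not merely a minor. The only point deserving care in the final write-up is the logical asymmetry of the claim: ``planar-connected'' is universally quantified over $N$, so the negative direction needs just one bad value of $N$, whereas the positive direction must cover all multiples of $N_G$ at once, which property (ii) delivers cleanly. It is also worth a sentence noting that the element-to-ground admittances are not part of $\mathcal{G}$ and hence do not affect this count, consistent with the remark following the definition of planar-connected RIS.
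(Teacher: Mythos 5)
Your proposal is correct, and it follows the same skeleton as the paper's proof: the graph of a group-connected RIS decomposes into components that are copies of $K_{N_G}$, the planarity threshold sits at $N_G=4$ because $K_5$ is non-planar while $K_4$ is planar, and the fully-connected RIS fails the definition because $K_N$ is non-planar for $N\geq 5$. The only methodological difference is the tool used to certify the threshold: the paper invokes Kuratowski's theorem (no subdivision of $K_5$ or $K_{3,3}$) to decide planarity of $K_{N_G}$, whereas you avoid the characterization entirely and argue by monotonicity, i.e., subgraphs of planar graphs are planar and disjoint unions of planar graphs are planar, combined with the known planarity of $K_4$ and non-planarity of $K_5$. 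This buys you a slightly more elementary and, in places, more careful argument: you make explicit the disjoint-union step and the asymmetry of the universally quantified definition (one bad $N$ suffices for the negative direction, all multiples of $N_G$ must be covered for the positive one), both of which the paper leaves implicit. The only caveat is that the elementariness is partly a relocation of the burden, since the non-planarity of $K_5$ itself is a nontrivial fact (typically via Euler's formula or Kuratowski), which you take as given from the figure; citing \cite[Chapter~9]{bon76} for it, as you do for the closure properties, closes that loose end. Your remark that the element-to-ground admittances lie outside $\mathcal{G}$ is consistent with the paper's treatment and harmless to the count.
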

\begin{proof}
To verify whether a group-connected RIS is planar, we recall a group-connected RIS with group size $N_G$ has an associated graph whose components are the complete graphs on $N_G$ vertices, denoted as $K_{N_G}$.
Thus, a group-connected RIS with group size $N_G$ is planar if and only if $K_{N_G}$ is a planar graph, which can be verified with Kuratowski's Theorem.
This theorem states that a graph is planar if and only if it does not contain any subdivision of $K_{5}$ (the complete graph on five vertices) or $K_{3,3}$ (the complete bipartite graph on six vertices, where three vertices are connected to all other three) \cite[Theorem~9.10]{bon76}.
Therefore, a group-connected RIS with group size $N_G$ is planar if and only if $N_G\leq4$.
Otherwise, $K_{5}$ is contained in the graph $\mathcal{G}$ of the RIS, which becomes non-planar.

The fully-connected RIS is not planar since the definition of planar-connected RIS (see Definition~1) requires its graph to be planar for any number of RIS elements $N$, which is not satisfied for $N\geq5$.
\end{proof}

\begin{proposition}
A $Q$-stem-connected RIS (\cite{zho25a}) is planar-connected if and only if $Q\leq2$.
\label{pro:stem}
\end{proposition}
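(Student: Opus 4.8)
The plan is to follow the same strategy as in the proof of Proposition~\ref{pro:group}, handling non-planarity with Kuratowski's Theorem and planarity with an explicit crossing-free drawing. First I would recall from \cite{zho25a,zho25b} the structure of the graph $\mathcal{G}$ associated with a $Q$-stem-connected RIS: it is organized around a set of $Q$ ``stem'' vertices, and every remaining (non-stem) vertex is joined to all $Q$ stem vertices (possibly together with some additional edges among stem vertices or among non-stem vertices). The key consequence is that, for $N\geq Q+3$, the subgraph induced by the $Q$ stem vertices together with any three non-stem vertices contains the complete bipartite graph $K_{Q,3}$ as a subgraph, and when $Q\geq4$ it even contains $K_5$ (four stem vertices plus one non-stem vertex).

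For the ``only if'' direction, assume $Q\geq3$ and take $N\geq Q+3$. By the observation above, $\mathcal{G}$ then contains $K_{3,3}$ as a subgraph, hence a subdivision of $K_{3,3}$, so Kuratowski's Theorem \cite[Theorem~9.10]{bon76} shows that $\mathcal{G}$ is non-planar for this value of $N$. Since the definition of planar-connected RIS requires $\mathcal{G}$ to be planar for \emph{every} $N$, a $Q$-stem-connected RIS with $Q\geq3$ is not planar-connected. (For $Q\geq4$ one can alternatively argue by counting edges: $\mathcal{G}$ then has more than $3N-6$ edges for $N$ large enough, which already violates the Euler bound for simple planar graphs.)

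For the ``if'' direction, I would produce a planar drawing of $\mathcal{G}$ that works for all $N$ when $Q\in\{1,2\}$. For $Q=1$ the graph is a collection of stars, which is acyclic and therefore planar by the argument of Proposition~\ref{pro:forest}. For $Q=2$, I would place the two stem vertices at the top and bottom of the figure and arrange the $N-2$ non-stem vertices in a horizontal row between them; each non-stem vertex then joins upward to one stem vertex and downward to the other with no crossing (the classical planar drawing of $K_{2,N-2}$), any edge between consecutive non-stem vertices is drawn along that row, and the stem--stem edge, if present, is routed around the outer face. Verifying that this drawing indeed accommodates \emph{all} edges of the $Q=2$ stem-connected topology completes the proof.

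The step I expect to be the main obstacle is pinning down the exact edge set of the $Q$-stem-connected topology from \cite{zho25a,zho25b}: the whole argument rests on knowing precisely how non-stem vertices attach to stem vertices and whether any further interconnections are present, since this is what guarantees the $K_{3,3}$ (or $K_5$) subdivision when $Q\geq3$ and what must be checked against the explicit embedding when $Q\leq2$. Once the structure is fixed, both directions follow in the same routine way as Proposition~\ref{pro:group}.
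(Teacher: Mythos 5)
Your proposal is correct, and its ``only if'' half coincides with the paper's argument: for $Q\geq3$ the three central (stem) vertices together with any three non-stem vertices yield a $K_{3,3}$ subgraph, so Kuratowski's Theorem \cite[Theorem~9.10]{bon76} gives non-planarity for some $N$, which suffices because planar-connectedness requires planarity for every $N$. Where you diverge is the ``if'' half: the paper simply asserts that for $Q\leq2$ the graph, having only one or two central vertices adjacent to all others, cannot contain a subdivision of $K_5$ or $K_{3,3}$ and is therefore planar, again by Kuratowski; you instead give an explicit crossing-free embedding --- $Q=1$ via acyclicity as in Proposition~\ref{pro:forest}, and $Q=2$ via the standard two-apex drawing with the non-stem vertices in a row, the two stem vertices above and below, consecutive-element edges along the row, and the stem--stem edge routed through the outer face. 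Your constructive route is more self-contained (the paper's claim that no forbidden subdivision exists is stated without verification, whereas your drawing exhibits planarity directly and even shows the $Q=2$ graph is a subgraph of the maximal planar graph of Example~3), at the cost of depending on the exact edge set of the stem topology from \cite{zho25a}, a dependence you correctly flag; since that topology has at most the edges your drawing accommodates, the argument goes through, and your side remark that $Q\geq4$ can alternatively be excluded by the $3N-6$ edge bound (cf.\ Proposition~\ref{pro:3N-6}) is a valid shortcut the paper does not use.
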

\begin{proof}
First, for $Q\leq2$, the $Q$-stem-connected RIS has an associated graph $\mathcal{G}$ where only one or two vertices (known as central vertices) are connected to all others \cite{zho25a}.
Since such a graph $\mathcal{G}$ cannot contain subdivisions of the complete graph $K_{5}$ or the complete bipartite graph $K_{3,3}$, $\mathcal{G}$ is planar \cite[Theorem~9.10]{bon76}.

Second, any $Q$-stem-connected RIS with $Q\geq3$ and $N\geq6$ has an associated graph $\mathcal{G}$ that contains $K_{3,3}$-type subgraphs induced by the three central vertices and any other three vertices, and thus is not planar \cite[Theorem~9.10]{bon76}.
\end{proof}

\begin{proposition}
A $Q$-band-connected RIS (\cite{wu25}) is planar-connected if and only if $Q\leq3$.
\label{pro:band}
\end{proposition}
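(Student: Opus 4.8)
The plan is to begin by recalling from \cite{wu25} the structure of a $Q$-band-connected RIS: up to relabeling, its associated graph $\mathcal{G}=(\mathcal{V},\mathcal{E})$ has $\mathcal{V}=\{v_1,\ldots,v_N\}$ and $(v_n,v_m)\in\mathcal{E}$ if and only if $1\le|n-m|\le Q$. The single structural fact I need is that any $Q+1$ consecutive vertices $v_j,v_{j+1},\ldots,v_{j+Q}$ are pairwise adjacent and thus induce a copy of the complete graph $K_{Q+1}$ in $\mathcal{G}$.

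For the ``only if'' part, suppose $Q\ge4$. Since being planar-connected requires $\mathcal{G}$ to be planar for every $N$, I may take $N\ge5$; then $v_1,\ldots,v_5$ have pairwise index differences at most $4\le Q$, so they induce $K_5$ in $\mathcal{G}$. By Kuratowski's Theorem \cite[Theorem~9.10]{bon76}, $\mathcal{G}$ is non-planar, hence the $Q$-band-connected RIS is not planar-connected when $Q\ge4$.

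For the ``if'' part, note first that for $Q\in\{1,2\}$ the graph is a subgraph of the $3$-band-connected graph $\mathcal{G}_N$, and a subgraph of a planar graph is planar, so it suffices to show $\mathcal{G}_N$ is planar for all $N$. For $N\le3$ this is trivial ($\mathcal{G}_N$ has at most three vertices). For $N\ge4$ I would prove by induction the stronger statement that $\mathcal{G}_N$ has a planar embedding possessing a triangular face whose boundary is $v_{N-2},v_{N-1},v_N$. The base case $N=4$ holds because $\mathcal{G}_4=K_4$, which can be drawn with triangle $v_2v_3v_4$ bounding the outer face and $v_1$ placed inside it. For the step, given such an embedding of $\mathcal{G}_k$, I would insert $v_{k+1}$ into the face bounded by $v_{k-2},v_{k-1},v_k$ and join it to these three vertices; since $|k+1-j|\le3$ forces $j\ge k-2$, the only neighbors of $v_{k+1}$ among $v_1,\ldots,v_k$ are exactly $v_{k-2},v_{k-1},v_k$, so no crossing is created and the resulting drawing is a planar embedding of $\mathcal{G}_{k+1}$. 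That face is split into three triangular faces, one of which is bounded by $v_{k-1},v_k,v_{k+1}$, which maintains the invariant and closes the induction.

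The hard part is the ``if'' direction, and within it the main obstacle is identifying the right inductive invariant: the fact that a face always sits on the three most recently inserted vertices is precisely what allows the next vertex --- whose only earlier neighbors are those three --- to be added planarly. Once this invariant is in place the construction is routine. The ``only if'' direction is immediate from the $K_5$ subgraph formed by any five consecutive vertices.
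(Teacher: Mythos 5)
Your proposal is correct and follows essentially the same route as the paper: the converse uses the $K_5$ induced by five consecutive vertices (via Kuratowski), and the forward direction reduces $Q\le3$ to the $3$-band case and builds a planar drawing recursively by inserting each new vertex into a face bounded by its three predecessors. Your explicit inductive invariant (a triangular face on the last three vertices) is just a slightly more rigorous statement of the face condition the paper's recursive construction relies on.
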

\begin{proof}
First, we prove that a $Q$-band-connected RIS with $Q\leq3$ is planar by proving that the $3$-band-connected RIS is planar.
This is sufficient since the graphs of $1$- and $2$-band-connected RISs are subgraphs of the graph of the $3$-band-connected RIS, and therefore are planar if the graph of the $3$-band-connected RIS is planar.
We prove that the $3$-band-connected RIS is planar by recursively constructing a planar drawing of its graph for any number of elements $N$.
Assume that the graph with $N-1$ vertices has a planar drawing.
To construct the graph for $N$ vertices, we add a new vertex $v_{N}$ and connect it to the last three existing vertices, namely $v_{N-3}$, $v_{N-2}$, and $v_{N-1}$.
For this recursive construction to preserve planarity, the vertices $v_{N-3}$, $v_{N-2}$, and $v_{N-1}$ must all lie on the same face of the current drawing, e.g., the outer face.
If this condition is satisfied, we can always insert the new vertex $v_{N}$ within that face and draw its three edges without introducing any crossings and maintaining the vertices $v_{N-2}$, $v_{N-1}$, and $v_{N}$ on the outer face, allowing the process to continue recursively.
This process is graphically shown in Fig.~\ref{fig:example1}, where we start from the case $N=3$ by drawing the graph as a triangle with vertices $v_{1}$, $v_{2}$, and $v_{3}$.
The graph for $N=4$ can be constructed as in Fig.~\ref{fig:example1}(a), by adding $v_{4}$ and its edges.
Then, the graph for $N=5$ can be constructed as in Fig.~\ref{fig:example1}(b), by adding $v_{5}$ and its edges, and so forth.
In Fig.~\ref{fig:example1}(c), we report a planar drawing for the $3$-band-connected RIS with $N=9$ elements constructed with this process.

Second, any $Q$-band-connected RIS with $Q\geq4$ and $N\geq5$ has an associated graph $\mathcal{G}$ that contains the complete graph $K_{5}$, e.g., the subgraph induced by the first five vertices, and thus is not planar \cite[Theorem~9.10]{bon76}.
\end{proof}

\begin{figure}[t]
\centering
\includegraphics[width=0.48\textwidth]{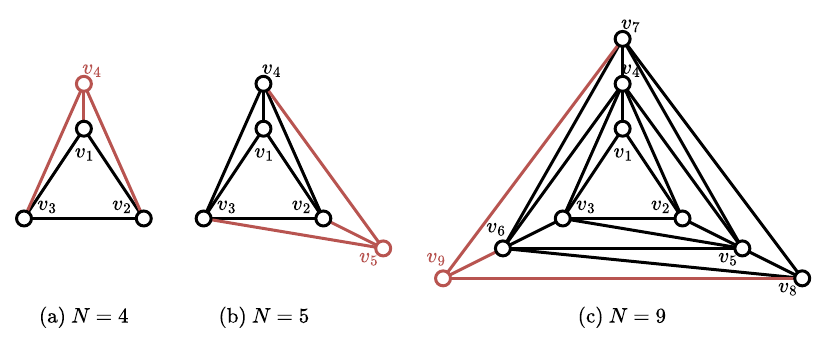}
\caption{Recursive construction of a planar drawing of the graph of a $3$-band-connected RIS with $N$ elements.}
\label{fig:example1}
\end{figure}

\begin{table}[t]
\centering
\caption{Properties of existing BD-RIS architectures.}
\label{tab:planar}
\begin{tabular}{@{}ll@{}}
\toprule
BD-RIS Architecture     & Planar-connected? \\
\midrule
Single-connected RIS [1]    & Planar-connected \\
Group-connected RIS [4]     & Planar-connected iff $N_G\leq 4$ \\
Fully-connected RIS [4]     & Not planar-connected\\
Forest-connected RIS [7]    & Planar-connected \\
Tree-connected RIS [7]      & Planar-connected \\
$Q$-stem-connected RIS [10] & Planar-connected iff $Q\leq2$\\
$Q$-band-connected RIS [12] & Planar-connected iff $Q\leq3$\\
\bottomrule
\end{tabular}
\end{table}

We summarize the findings of Propositions~\ref{pro:forest} to \ref{pro:band} in Table~\ref{tab:planar}, where we report whether the existing BD-RIS architectures are planar-connected.
Single-, forest-, and tree-connected RISs are always planar, while group-, $Q$-stem-, and $Q$-band-connected RISs are planar only under conditions on the group size $N_G$ and $Q$, and the fully-connected RIS is not planar.

\begin{figure}[t]
\centering
\subfigure[Example 1]{
\includegraphics[width=0.12\textwidth]{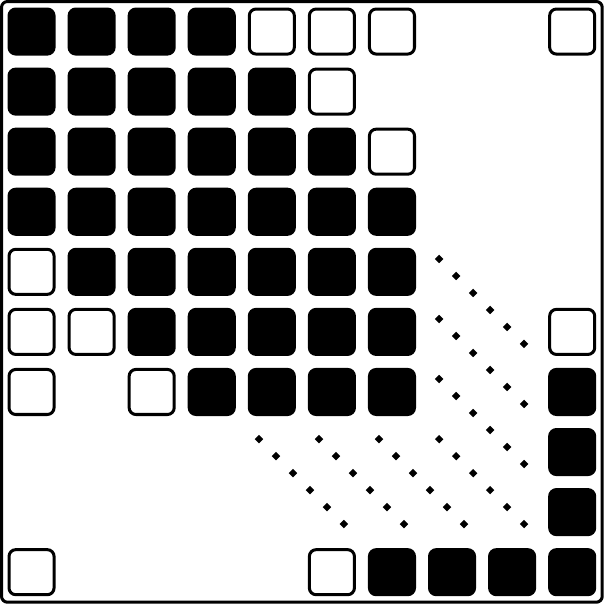}}
\subfigure[Example 2]{
\includegraphics[width=0.12\textwidth]{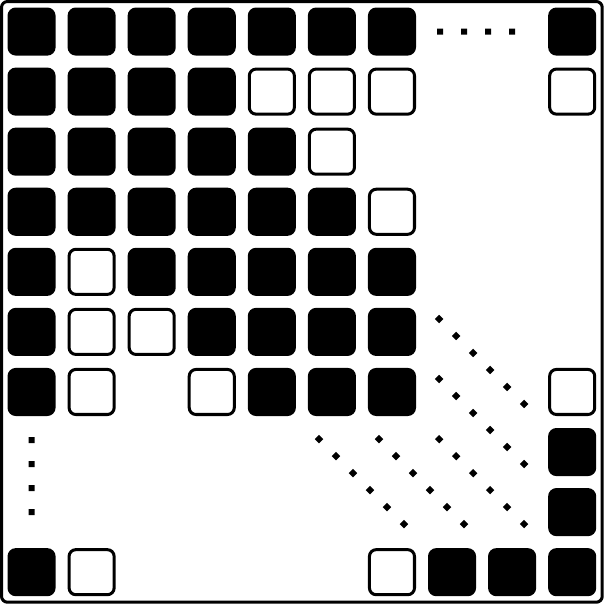}}
\subfigure[Example 3]{
\includegraphics[width=0.12\textwidth]{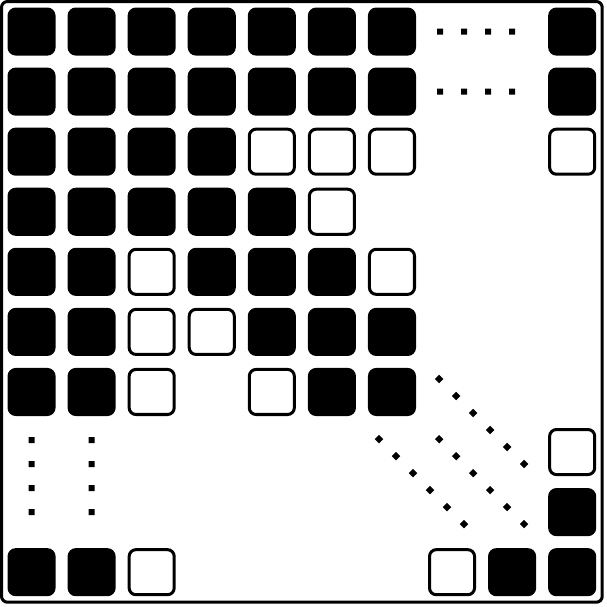}}
\caption{Admittance matrix $\mathbf{Y}$ of the three examples of maximal-planar-connected RIS, with tunable entries in black and zero entries in white.}
\label{fig:y}
\end{figure}

\section{Maximal-Planar-Connected RIS}

We have observed that most of the previously proposed BD-RIS architectures are not planar-connected RISs.
Therefore, in this section, we characterize the most complex planar-connected RIS architectures, i.e., the planar-connected RISs with the largest number of tunable admittance components.

The following proposition provides a necessary condition for a BD-RIS architecture to be planar-connected.
\begin{proposition}
A BD-RIS architecture can be planar-connected only if its graph $\mathcal{G}$ has no more than $3N-6$ edges, with $N\geq3$ denoting the number of RIS elements.
\label{pro:3N-6}
\end{proposition}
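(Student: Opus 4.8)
The plan is to derive this necessary condition from the classical edge bound for simple planar graphs, which is a standard corollary of Euler's formula, and to account for the fact that the graph $\mathcal{G}$ of a BD-RIS may be disconnected. First I would note that, by the definition of planar-connected RIS, the graph $\mathcal{G}=(\mathcal{V},\mathcal{E})$ is planar for every $N$, hence in particular for the given $N\geq3$ it admits a planar embedding. I would also observe that $\mathcal{G}$ is simple: by the definition of the edge set $\mathcal{E}$, there is at most one edge between any two distinct vertices and there are no self-loops, since each admittance $Y_{n,m}$ with $n\neq m$ contributes a single edge and the ground admittances $Y_n$ do not appear in $\mathcal{G}$.

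Next, assuming first that $\mathcal{G}$ is connected, I would invoke Euler's formula $|\mathcal{V}|-|\mathcal{E}|+|\mathcal{F}|=2$, where $\mathcal{F}$ is the set of faces of a planar embedding of $\mathcal{G}$ \cite[Chapter~9]{bon76}. Since $\mathcal{G}$ is simple with $N\geq3$, the boundary of every face contains at least three edges, and every edge lies on the boundary of at most two faces; counting edge--face incidences in two ways gives $3|\mathcal{F}|\leq2|\mathcal{E}|$. Substituting $|\mathcal{F}|\leq\tfrac{2}{3}|\mathcal{E}|$ and $|\mathcal{V}|=N$ into Euler's formula yields $2\leq N-|\mathcal{E}|+\tfrac{2}{3}|\mathcal{E}|=N-\tfrac{1}{3}|\mathcal{E}|$, i.e., $|\mathcal{E}|\leq3N-6$.

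Finally, I would remove the connectedness assumption: if $\mathcal{G}$ is disconnected, as happens, e.g., for forest- and group-connected RISs, one can add edges joining its components through a shared face, which preserves planarity and only increases the number of edges, so applying the connected bound to the augmented graph shows that $\mathcal{G}$ itself satisfies $|\mathcal{E}|\leq3N-6$. I do not anticipate a substantial obstacle, since this is a textbook consequence of Euler's formula; the only points that need care are verifying that $\mathcal{G}$ is simple, using the hypothesis $N\geq3$ (without which a face need not be bounded by three edges), and the reduction from the disconnected to the connected case.
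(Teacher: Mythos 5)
Your proposal is correct and follows essentially the same route as the paper: the proposition is an immediate consequence of the classical bound that a simple planar graph on $N\geq3$ vertices has at most $3N-6$ edges, which the paper simply cites (as \cite[Theorem~1.4]{fel12}) while you rederive it from Euler's formula. Your extra details (checking that $\mathcal{G}$ is simple, using $N\geq3$ in the face-counting step, and handling disconnected $\mathcal{G}$ by augmenting it to a connected planar graph) are all sound and merely make the cited textbook result self-contained.
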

\begin{proof}
The proposition is proved since the maximum number of edges in a planar graph on $N$ vertices is $3N-6$ \cite[Theorem~1.4]{fel12}.
\end{proof}
Given Proposition~\ref{pro:3N-6}, we are interested in planar-connected RISs whose interconnection graph has exactly $3N-6$ edges, i.e., known as \textit{maximal planar} graphs in graph theory \cite{fel12}.
We denote this family of BD-RIS architectures as maximal-planar-connected RISs, as formalized in the following definition.
\begin{definition}
(Maximal-planar-connected RIS)
A BD-RIS architecture with associated graph $\mathcal{G}$ is denoted as maximal-planar-connected when $\mathcal{G}$ is maximal planar for any number of RIS elements $N$.
\end{definition}
Maximal-planar-connected RISs are therefore the planar-connected RISs with the highest circuit complexity, given by the number of tunable admittance components.
They include $3N-6$ admittance components interconnecting the RIS elements to each other, and $N$ additional admittance components connecting each RIS element to ground, yielding a total of $4N-6$ admittance components.
Since there are numerous maximal-planar-connected RISs available, we provide three concrete examples.
While these architectures all have the same number of tunable components, their practical implementation may differ in difficulty depending on their specific topology.

\subsubsection{Example 1 ($3$-band-connected RIS)}

The first example of maximal-planar-connected RIS is the $Q$-band-connected RIS with $Q=3$, obtained by interconnecting elements $n$ and $m$ through an admittance if $\vert n-m\vert\leq3$, $\forall n\neq m$, for $n,m=1,\ldots,N$.
The resulting admittance matrix is a banded matrix, specifically heptadiagonal (see Fig.~\ref{fig:y}(a)).
Note that this RIS architecture is planar because of Proposition~\ref{pro:band}, and has $3N-6$ interconnections.
Therefore, it is maximal-planar-connected.

\subsubsection{Example 2}

A second example of maximal-planar-connected RIS can be obtained by interconnecting one element (named the central element) to all other elements through an admittance and by additionally interconnecting elements $n$ and $m$ if $\vert n-m\vert\leq2$, $\forall n\neq m$, for $n,m=1,\ldots,N$.
Following this definition and \eqref{eq:Yij}, the admittance matrix $\mathbf{Y}$ of such an RIS allows non-zero off-diagonal entries only on the two upper/lower diagonals closest to the main diagonal and the first row/column, assuming the central element to be the RIS element $1$ (see Fig.~\ref{fig:y}(b)).
We provide an example of such an RIS having $N=8$ elements in Fig.~\ref{fig:example2}, where we omit the admittances connecting each RIS element to ground.

\begin{figure}[t]
\centering
\includegraphics[width=0.42\textwidth]{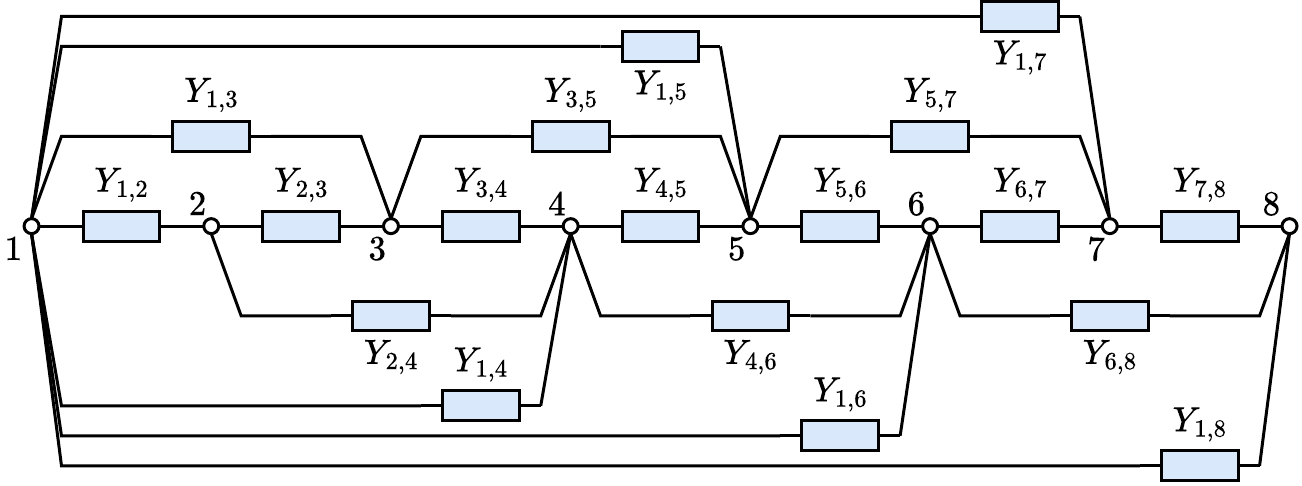}
\caption{Example 2 of maximal-planar-connected RIS, with $N=8$ elements.}
\label{fig:example2}
\end{figure}

\subsubsection{Example 3}

A third example of maximal-planar-connected RIS can be obtained by interconnecting two elements (the central elements) to all other elements through an admittance and by additionally interconnecting elements $n$ and $m$ if $\vert n-m\vert=1$, $\forall n\neq m$, for $n,m=1,\ldots,N$.
The admittance matrix $\mathbf{Y}$ of this RIS allows non-zero off-diagonal entries only on the upper/lower diagonal and the first two rows/columns, assuming the central elements to be the RIS elements $1$ and $2$ (see Fig.~\ref{fig:y}(c)).
We provide an example of such an RIS architecture having $N=8$ elements in Fig.~\ref{fig:example3}.

Remarkably, these three examples of maximal-planar-connected RIS are all optimal architectures in a wireless system with $D=2$ degrees of freedom, i.e., they fulfill the condition in \cite[Theorem~1]{wu25} with $D=2$.
Examples~2 and 3 bridge between the $3$-band- and $3$-stem-connected RIS.

\section{Performance and Complexity Evaluation}

In this section, we compare maximal-planar-connected RISs with the other existing BD-RIS architectures in terms of achieved performance and circuit complexity.
The performance is measured by the sum rate obtained by optimizing the BD-RIS in multi-user \gls{miso} systems, while the circuit complexity is given by the number of tunable admittance components included in the BD-RIS.
While maximal-planar-connected RISs are the planar RISs with maximum flexibility, they are not claimed to achieve optimal sum rate.


Consider an $M$-antenna transmitter serving $K$ single-antenna receivers through the support of an $N$-element BD-RIS.
We denote the channel from the transmitter to the RIS as $\mathbf{H}_{IT}\in\mathbb{C}^{N\times M}$, and from the RIS to the $k$th receiver as $\mathbf{h}_{RI,k}\in\mathbb{C}^{1\times N}$, for $k=1,\ldots,K$.
The channel seen by the $k$th receiver $\mathbf{h}_k\in\mathbb{C}^{1\times M}$ is therefore given by $\mathbf{h}_k=\mathbf{h}_{RI,k}\boldsymbol{\Theta}\mathbf{H}_{IT}$, assuming that the direct links between transmitter and receivers are obstructed (the direct links are neglected to isolate the contribution of the RIS-aided link).
The transmitted signal $\mathbf{x}\in\mathbb{C}^{M\times 1}$ is $\mathbf{x}=\sum_{k=1}^K\mathbf{w}_ks_k$, where $\mathbf{w}_k\in\mathbb{C}^{M\times 1}$ and $s_k\in\mathbb{C}$ are the precoding vector and data symbol for the $k$th receiver.
The precoding vectors are subject to the transmit power constraint $\Vert\mathbf{W}\Vert_F^2\leq P_T$, where $\mathbf{W}=[\mathbf{w}_1,\ldots,\mathbf{w}_K]$, and the data symbols are subject to $\mathbb{E}[\vert s_k\vert^{2}]=1$.
At the $k$th receiver, the received signal $y_k\in\mathbb{C}$ is $y_k=\mathbf{h}_k\mathbf{w}_ks_k+\sum_{i\neq k}\mathbf{h}_k\mathbf{w}_is_i+n_k$, where $n_k\sim\mathcal{CN}(0,\sigma^2)$ is the \gls{awgn}, and the the \gls{sinr} is $\gamma_k=\vert\mathbf{h}_k\mathbf{w}_k\vert^2/(\sum_{i\neq k}\vert\mathbf{h}_k\mathbf{w}_i\vert^2+\sigma^2)$.

\begin{figure}[t]
\centering
\includegraphics[width=0.42\textwidth]{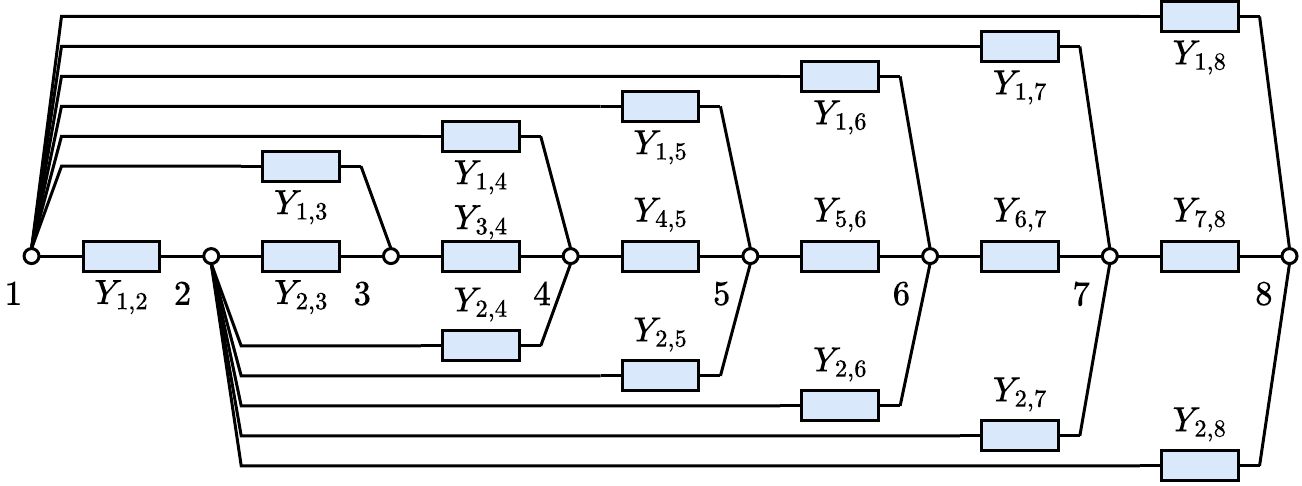}
\caption{Example 3 of maximal-planar-connected RIS, with $N=8$ elements.}
\label{fig:example3}
\end{figure}

We assume the BD-RIS to be made of lossless and reciprocal admittance components, such that the admittance matrix $\mathbf{Y}$ is purely imaginary and writes as $\mathbf{Y}=j\mathbf{B}$, where $\mathbf{B}\in\mathbb{R}^{N\times N}$ is the BD-RIS susceptance matrix, subject to $\mathbf{B}=\mathbf{B}^{T}$.
In addition, for a BD-RIS with associated graph $\mathcal{G}=(\mathcal{V},\mathcal{E})$, the susceptance matrix has specific entries forced to zero, i.e., $\mathbf{B}\in\mathcal{B}_{\mathcal{G}}$, where
\begin{equation}
\mathcal{B}_{\mathcal{G}}=\left\{\mathbf{B}\mid\left[\mathbf{B}\right]_{n,m}=0,\;\forall n\neq m,\;
\left(v_n,v_m\right)\notin\mathcal{E}\right\},
\end{equation}
indicating that the $(m,n)$th entry of $\mathbf{B}$ is forced to zero if elements $m$ and $n$ are not interconnected by an admittance.

\begin{figure*}[t]
\centering
\subfigure[]
{\includegraphics[height=0.20\textwidth]{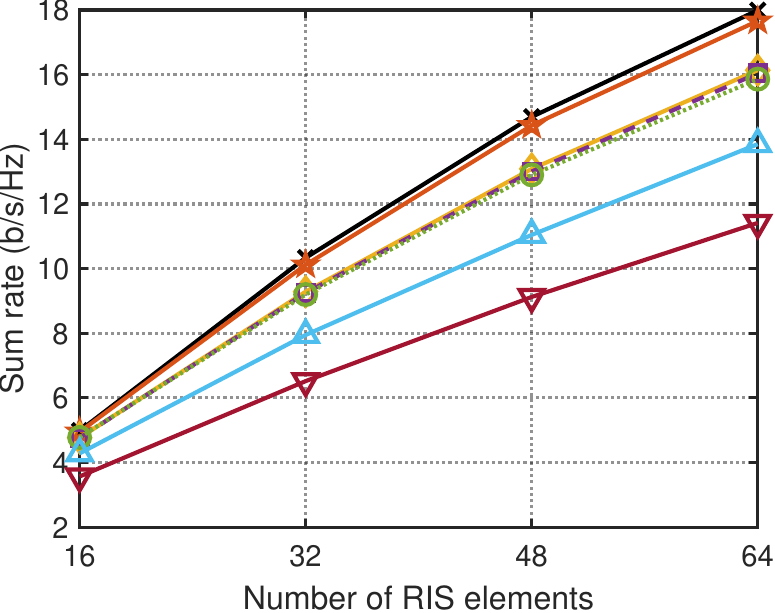}}
\subfigure[]
{\includegraphics[height=0.20\textwidth]{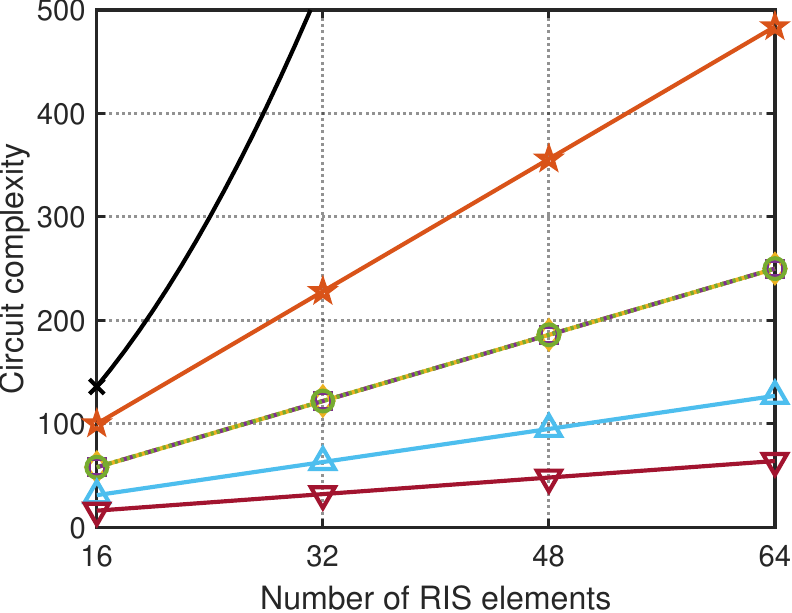}}
\subfigure[$N=64$]
{\includegraphics[height=0.20\textwidth]{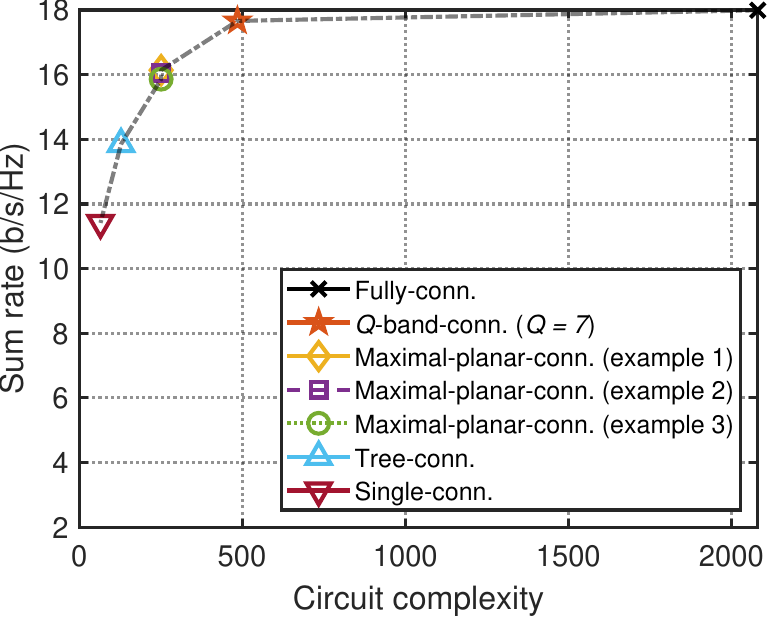}}
\caption{(a) Sum rate, (b) circuit complexity, and (c) their trade-off for different RIS architectures.
The sum rate is calculated with $M=4$ transmit antennas, $K=4$ receivers, $P_T=10$~dBm, and $\sigma^2=-80$~dBm, and averaged over Rayleigh distributed channel realizations.}
\label{fig:results}
\end{figure*}

In this system, the sum rate maximization problem is
\begin{align}
\underset{\mathbf{W},\boldsymbol{\Theta}}{\mathsf{\mathrm{max}}}\;\;
&\sum_{k=1}^K\log_2\left(1+\gamma_k\left(\mathbf{W},\boldsymbol{\Theta}\right)\right)\\
\mathsf{\mathrm{s.t.}}\;\;\;
&\boldsymbol{\Theta}=\left(\mathbf{I}+jZ_{0}\mathbf{B}\right)^{-1}\left(\mathbf{I}-jZ_{0}\mathbf{B}\right),\\
&\mathbf{B}=\mathbf{B}^{T},\;\mathbf{B}\in\mathcal{B}_{\mathcal{G}},\;\left\Vert\mathbf{W}\right\Vert_F^2\leq P_T,
\end{align}
where the precoding matrix $\mathbf{W}$ and the BD-RIS scattering matrix $\boldsymbol{\Theta}$ are the optimization variables.
This optimization problem has been solved in \cite[Section~III]{wu24} through an algorithm valid for any arbitrary BD-RIS architecture, i.e., for any set $\mathcal{B}_{\mathcal{G}}$.
Therefore, we can use that algorithm to also optimize maximal-planar-connected RISs.


To numerically evaluate the sum rate achieved by the different BD-RIS architectures, we consider a multi-user system with $M=4$ transmit antennas, $K=4$ receivers, $P_T=10$~dBm, and $\sigma^2=-80$~dBm.
The channels are generated as \gls{iid} Rayleigh distributed, with path gain given as in \cite[Section~VI]{wu24}.

In Fig.~\ref{fig:results}(a), we report the sum rate achieved with different BD-RIS architectures, including: fully-connected, band-connected with $Q=7$ (proved to achieve the same performance as the fully-connected in this system \cite{wu25}), maximal-planar-connected (examples 1, 2, and 3 whose admittance matrices are represented in Fig.~\ref{fig:y}), tree-connected, and single-connected RISs.
In Fig.~\ref{fig:results}(b), we report the circuit complexity of the same architectures, which is $N(N+1)/2$ for fully-connected \cite{she22}, $(Q+1)(2N-Q)/2=8N-28$ for band-connected \cite{wu25}, $4N-6$ for maximal-planar-connected, $2N-1$ for tree-connected \cite{ner24}, and $N$ for single-connected RISs.
Finally, in Fig.~\ref{fig:results}(c), we report the performance-complexity trade-off enabled by those architectures.
We make the following three observations.
\textit{First}, both fully- and band-connected RISs achieve maximum performance (the tiny gap is due to numerical optimization error), with the band-connected RIS having significantly reduced circuit complexity, which grows linearly with the number of RIS elements.
\textit{Second}, the single-connected RIS achieves the lowest performance since it has limited flexibility, while it also has minimum circuit complexity (it is the planar-connected RIS with minimal complexity, in opposition to maximal-planar-connected RISs).
\textit{Third}, the maximal-planar-connected RISs allow for a favorable balance between performance and circuit complexity, being the most complex BD-RIS architectures implementable in a double-layer \gls{pcb}, i.e., where the interconnections do not intersect.

\section{Conclusion}

In this letter, we have investigated BD-RIS architectures that can be implemented on a double-layer \gls{pcb}, in which one layer is the ground plane.
Using a graph-theoretical modeling, we have identified the necessary conditions for a BD-RIS to be implementable on a double-layer \gls{pcb}, and introduced the concept of planar-connected RIS.
We have further examined existing BD-RIS architectures to determine whether they are planar and characterized the class of maximal-planar-connected RISs, offering the highest flexibility under the double-layer \gls{pcb} constraints.
Our results provide useful guidelines for developing low-complexity BD-RISs.
Future research could investigate the class of planar RISs that maximize performance, and the most flexible RISs implementable with a limited number of \gls{pcb} layers.

\bibliographystyle{IEEEtran}
\bibliography{IEEEabrv,main}

\begin{thebibliography}{10}
\providecommand{\url}[1]{#1}
\csname url@samestyle\endcsname
\providecommand{\newblock}{\relax}
\providecommand{\bibinfo}[2]{#2}
\providecommand{\BIBentrySTDinterwordspacing}{\spaceskip=0pt\relax}
\providecommand{\BIBentryALTinterwordstretchfactor}{4}
\providecommand{\BIBentryALTinterwordspacing}{\spaceskip=\fontdimen2\font plus
\BIBentryALTinterwordstretchfactor\fontdimen3\font minus \fontdimen4\font\relax}
\providecommand{\BIBforeignlanguage}[2]{{%
\expandafter\ifx\csname l@#1\endcsname\relax
\typeout{** WARNING: IEEEtran.bst: No hyphenation pattern has been}%
\typeout{** loaded for the language `#1'. Using the pattern for}%
\typeout{** the default language instead.}%
\else
\language=\csname l@#1\endcsname
\fi
#2}}
\providecommand{\BIBdecl}{\relax}
\BIBdecl

\bibitem{wu19}
Q.~Wu and R.~Zhang, ``Intelligent reflecting surface enhanced wireless network via joint active and passive beamforming,'' \emph{IEEE Trans. Wireless Commun.}, vol.~18, no.~11, pp. 5394--5409, 2019.

\bibitem{dir20}
M.~Di~Renzo, A.~Zappone, M.~Debbah, M.-S. Alouini, C.~Yuen, J.~de~Rosny, and S.~Tretyakov, ``Smart radio environments empowered by reconfigurable intelligent surfaces: How it works, state of research, and the road ahead,'' \emph{IEEE J. Sel. Areas Commun.}, vol.~38, no.~11, pp. 2450--2525, 2020.

\bibitem{wu21}
Q.~Wu, S.~Zhang, B.~Zheng, C.~You, and R.~Zhang, ``Intelligent reflecting surface-aided wireless communications: A tutorial,'' \emph{IEEE Trans. Commun.}, vol.~69, no.~5, pp. 3313--3351, 2021.

\bibitem{she22}
S.~Shen, B.~Clerckx, and R.~Murch, ``Modeling and architecture design of reconfigurable intelligent surfaces using scattering parameter network analysis,'' \emph{IEEE Trans. Wireless Commun.}, vol.~21, no.~2, pp. 1229--1243, 2022.

\bibitem{li23}
H.~Li, S.~Shen, and B.~Clerckx, ``Beyond diagonal reconfigurable intelligent surfaces: From transmitting and reflecting modes to single-, group-, and fully-connected architectures,'' \emph{IEEE Trans. Wireless Commun.}, vol.~22, no.~4, pp. 2311--2324, 2023.

\bibitem{li25}
H.~Li, M.~Nerini, S.~Shen, and B.~Clerckx, ``A tutorial on beyond-diagonal reconfigurable intelligent surfaces: Modeling, architectures, system design and optimization, and applications,'' \emph{IEEE Commun. Surv. Tutor.}, vol.~28, pp. 4086--4126, 2026.

\bibitem{ner24}
M.~Nerini, S.~Shen, H.~Li, and B.~Clerckx, ``Beyond diagonal reconfigurable intelligent surfaces utilizing graph theory: Modeling, architecture design, and optimization,'' \emph{IEEE Trans. Wireless Commun.}, vol.~23, no.~8, pp. 9972--9985, 2024.

\bibitem{ner23}
M.~Nerini and B.~Clerckx, ``Pareto frontier for the performance-complexity trade-off in beyond diagonal reconfigurable intelligent surfaces,'' \emph{IEEE Commun. Lett.}, vol.~27, no.~10, pp. 2842--2846, 2023.

\bibitem{ner25}
M.~Nerini and B.~Clerckx, ``Dual-polarized beyond diagonal {RIS},'' \emph{IEEE Commun. Lett.}, vol.~29, no.~5, pp. 928--932, 2025.

\bibitem{zho25a}
X.~Zhou, T.~Fang, and Y.~Mao, ``A novel {Q}-stem connected architecture for beyond-diagonal reconfigurable intelligent surfaces,'' in \emph{ICC 2025 - IEEE International Conference on Communications}, 2025, pp. 6880--6885.

\bibitem{zho25b}
X.~Zhou, T.~Fang, Y.~Mao, and B.~Clerckx, ``Generalized beyond-diagonal {RIS} architectures: Theory and design via structure-oriented symmetric unitary projection,'' \emph{arXiv preprint arXiv:2509.17804}, 2025.

\bibitem{wu25}
Z.~Wu and B.~Clerckx, ``Beyond-diagonal {RIS} in multiuser {MIMO}: Graph theoretic modeling and optimal architectures with low complexity,'' \emph{IEEE Trans. Inf. Theory}, vol.~71, no.~11, pp. 8506--8523, 2025.

\bibitem{poz11}
D.~M. Pozar, \emph{Microwave engineering}.\hskip 1em plus 0.5em minus 0.4em\relax John wiley \& sons, 2011.

\bibitem{bon76}
J.~A. Bondy and U.~S.~R. Murty, \emph{Graph theory with applications}.\hskip 1em plus 0.5em minus 0.4em\relax Macmillan London, 1976, vol. 290.

\bibitem{fel12}
S.~Felsner, \emph{Geometric graphs and arrangements: Some chapters from combinatorial geometry}.\hskip 1em plus 0.5em minus 0.4em\relax Springer Science \& Business Media, 2012.

\bibitem{wu24}
Z.~Wu and B.~Clerckx, ``Optimization of beyond diagonal {RIS}: A universal framework applicable to arbitrary architectures,'' \emph{arXiv preprint arXiv:2412.15965}, 2024.

\end{thebibliography}
\end{document}